\newcommand{\CC}{\mathcal{C}}
\newcommand{\F}{\mathbb{F}}
\newcommand{\N} {\ensuremath{{\rm N}}}
\newcommand{\Z} {\ensuremath{\mathbb{Z}}}
\newcommand{\wt} {\ensuremath{\textnormal{wt}}}
\newcommand{\ho}{\mbox{\rm Hom}}
\newtheorem*{thm*}{Theorem}
\newtheorem{thm}{Theorem}[section]
\newtheorem{dfn}[thm]{Definition}
\newtheorem{cor}[thm]{Corollary}
\newtheorem{rmk}[thm]{Remark}
\newtheorem{exa}[thm]{Example}
\title{On the algebraic structure of quasi group codes}
\author{Martino Borello}
\thanks{M. Borello is with Universit\'e Paris 8, Laboratoire de G\'eom\'etrie, Analyse et Applications, LAGA,
Universit\'e Sorbonne Paris Nord, CNRS, UMR 7539, France}
\author{Wolfgang Willems}
\thanks{W. Willems is with Otto-von-Guericke Universit\"{a}t, Magdeburg, Germany and
Universidad del Norte, Barranquilla, Colombia}
\begin{document}

\begin{abstract}
In this note, an intrinsic description of some families of linear
codes with symmetries is given, showing that they can be described
more generally as quasi group codes, that is, as linear codes allowing a
group of permutation automorphisms which acts freely on the set of coordinates. An algebraic description,
including the concatenated structure, of such codes is presented. This allows to construct quasi group codes from codes over rings, and vice versa.
The last part of the paper is dedicated to the investigation of self-duality of quasi group codes.
\end{abstract}

\maketitle

\section{Introduction}

In the theory of error correcting codes linear codes play a
central role due to their algebraic structure which allows, for
example, an easy description and storage. Quite early in coding theory
 it appeared convenient to add additional algebraic structure
in order to get more information about the parameters and to speed up
 the decoding process. In 1957, E. Prange introduced the
now well-known class of cyclic codes \cite{P}, which are the
forefathers of many other families of codes with symmetries
discovered thereafter. In particular, abelian codes \cite{B}, group codes
\cite{M}, quasi-cyclic codes \cite{CPW}, quasi-abelian codes
\cite{W}, and twisted group codes \cite{DW} are distinguished descendants of cyclic codes. All of them have 
nice algebraic structures, and many remarkable and optimal codes belong to these
families. Moreover it is proved that the family of group codes is asymptotically good \cite{BM,BMS,BW2}, as the family of quasi-cyclic and quasi-abelian codes \cite{BGSS,JL,K}.

The aim of the current  note is to give an intrinsic description of
these classes in terms of their permutation automorphism groups and to deduce some structural properties from that. We
show that all of them admit a subgroup of the full group of permutation
automorphisms which acts freely on the set of coordinates. A linear code with this
property will be called a quasi group code. So all families cited above can be
described in this way. Note that quasi group codes (over Frobenius
rings) have also been introduced  in a recent paper by S. Dougherty
et al. \cite{DGTT}. It turns out  that the families of quasi group codes, quasi-abelian codes
and quasi-cyclic codes coincide. Furthermore, we give the most general
algebraic description  of quasi group codes (Theorem \ref{thm:str}) and describe
their concatenated structure (Theorem \ref{thm:con})
without any restriction on the group which acts on. In addition, no assumption
on semi-simplicity (unlike in previous papers on the subject) is
needed. In contrast with the concatened structure of quasi-abelian codes, quasi group codes for a non-abelian group can be decomposed as concatenations of codes over non-necessary commutative (Frobenius) rings, such as matrix rings. There is an active research on codes over (Frobenius) rings over the last decades (see for example \cite{AS,ASSY,DS}). In particular, codes over matrix rings may be used in applications to Space-Time Coded Modulations \cite{OSB}. From the investigation of good quasi group codes we get, as a byproduct, constructions of good codes over (Frobenius) rings, and vice versa. 

In the last part of the paper, we deal with self-duality of
quasi group codes. We prove a necessary and sufficient condition for the existence of self-dual quasi group codes depending
on the underlying field and on the index.

\bigskip

\section{Background}\label{sec:bac}

In this section we collect  some preliminaries which are crucial in the subsequent sections.\\

Let $K$ be a finite field of cardinality $q$.
 A \emph{linear code} $\CC$ of \emph{length} $n$ is a
$K$-linear subspace of $K^n$. An element $c=(c_1,\ldots,c_n)\in\CC$ is called a
\emph{codeword} and its (Hamming) \emph{weight} is given by
$$\wt(c):= |\{i\in\{1,\ldots,n\} \ | \ c_i\neq 0\}|.$$
The \emph{minimum distance} of $\CC$ is defined by  ${\rm d}(\CC):=
\min_{c\in \CC \setminus\{0\}} \wt(c)$. An $[n,k,d]_q$ code is a linear code
of length $n$, dimension $k$ and minimum distance $d$ over a field
of cardinality $q$.
 These, i.e.,  $n,k,d$ and $q$ are usually called \emph{the parameters} of the code.\\

There is a standard way of combining codes to obtain a code of
larger length. To describe this process, let
 $K\subseteq L$ be a field extension, $m$ an integer greater than or equal to $[L:K]$ and let $\pi:L\to K^m$ be a $K$-linear injection.
In the concatenation process, $L$-linear codes in $L^n$ are called
\emph{outer codes} and the $K$-linear code $\mathcal{I}:=\pi(L)$ is
called an \emph{inner code}. If $C\subseteq L^n$ is an $L$-linear code, then the $K$-linear code
$$\mathcal{I} \ \Box_\pi \ \mathcal{C}:=\{(\pi(c_1),\ldots,\pi(c_n))\ | \ (c_1,\ldots,c_n)\in \mathcal{C}\}\subseteq K^{n\cdot m}$$
is called the \emph{concatenation of} $\mathcal{C}$ \emph{with}
$\mathcal{I}$ \emph{by} $\pi$, or simply the \emph{concatenated
code}.

This construction obviously depends on the choice of $\pi$, but
there are some properties independent of  $\pi$. For example,
the length of $\mathcal{I} \ \Box_\pi \ \mathcal{C}$ is $n\cdot m$,
the $K$-dimension  of $\mathcal{I} \ \Box_\pi \ \mathcal{C}$ is
the product $\dim_K(\mathcal{I})\cdot \dim_L(\mathcal{C})$, and we
have the bound
 ${\rm d}(\mathcal{I} \ \Box_\pi \ \mathcal{C})\geq {\rm d}(\mathcal{I})\cdot
{\rm d}(\mathcal{C})$ (see for example \cite{HP}).\\

There are some natural group actions associated to linear codes.
The symmetric group $S_n$ acts on the set $\{1,\ldots,n\}$ of coordinates
by definition. This action induces an \emph{action on the elements} of
$K^n$, namely  for $v\in K^n$ and $\sigma\in S_n$ we have
$$v^\sigma:=(v_{\sigma^{-1}(1)},v_{\sigma^{-1}(2)},\ldots,v_{\sigma^{-1}(n)}),$$
which induces an
\emph{action on subsets} of $K^n$ (and in particular on linear codes). For
$\mathcal{C}\subseteq K^n$ we put
$$\mathcal{C}^\sigma:=\{c^\sigma \ | \ c\in \mathcal{C}\}.$$
An element $\sigma\in S_n$ is an  \emph{automorphism} of
$\mathcal{C}$ if $\mathcal{C}^\sigma=\mathcal{C}$.  The stabilizer
$${\rm PAut}(\mathcal{C}):=\{\sigma\in S_n \ | \
\mathcal{C}^\sigma=\mathcal{C}\}$$ is called the
\emph{permutation automorphism group} of $\mathcal{C}$.
Moreover, a linear code $\mathcal{C}_1$ is  \emph{equivalent} or better \emph{permutation equivalent} to
a linear code $\mathcal{C}_2$ 
if there exists $\sigma\in S_n$ such that $\mathcal{C}_1^\sigma=\mathcal{C}_2$.
This is not the most general definition of equivalence but it is sufficient for our purpose.
Finally, it is easy to see that ${\rm PAut}(\CC^\sigma)={\rm PAut}(\CC)^\sigma$ ({\it the conjugate of} ${\rm PAut}(\CC)$ {\it in} $S_n$ 
{\it by} $\sigma$).\\

From group theory we recall that a (right) action of a group $G$ on a set $X$ is
called
\begin{itemize}
\item \emph{transitive} if $X\neq \emptyset$ and for all $x,y\in
X$ there  exists $g \in G$ such that $x^g=y$;
\item \emph{free} if $x^g=x$ for some $x \in X$ and $g \in G$, then $g$ is the unit in $G$;
\item \emph{regular} if it is both transitive and free.
\end{itemize}

In the case that the group $G$ is finite and acts freely, it immediately
follows   that all orbits have
cardinality $|G|$. In particular, $|G|$ divides $|X|$ and $\ell = \frac{|X|}{|G|}$ is called the {\it index} of the action. 
Moreover, if $G$ is
regular, then $|G|= |X|$. \\

A subgroup $H$ of the symmetric group $S_n$ is called
{\it transitive} (resp. {\it regular}) if  of the action of $H$
 on the set of coordinates
$\{1,\ldots,n\}$ is transitive (resp. regular).\\

Finally, given a group $G$ and a field $K$, the \emph{group algebra}
$KG$ is the set of formal sums $$KG:=\left\{\left.a=\sum_{g\in
G}a_gg \ \right| \ a_g\in K\right\},$$ which is a $K$-vector space
in a natural way and which becomes a $K$-algebra via the
multiplication
 $$ab:=\sum_{g \in G}\left(\sum_{h\in G} a_hb_{h^{-1}g}\right)g,$$
for $\displaystyle a=\sum_{g\in G}a_gg$ and $\displaystyle
b=\sum_{g\in G}b_gg$.

\bigskip

\section{Codes with a free acting group of symmetries}

Let $G$ be a finite group of cardinality $n$ and let $K$ be a finite field.
Recall that the group algebra $KG$ is isomorphic to $K^n$ as a $K$-vector space, where $n=|G|$. There is a standard
way of constructing such an isomorphism, which allows us to transfer many
coding theoretical properties from $K^n$ to $KG$. Once an
ordering $g_1,\ldots,g_n$ of the elements of $G$ is chosen, we may define
$\varphi: g_i\mapsto e_i$, where $\{e_1,\ldots,e_n\}$ is the
standard basis of $K^n$. Then we extend this map $K$-linearly so that
\begin{equation}\label{eq:iso}
\varphi: \sum_{i=1}^na_ig_i\mapsto (a_1,\ldots,a_n).
\end{equation}
The isomorphism $\varphi$ obtained in this way is not canonical, since it
depends on the ordering of the group. But different orderings lead only to a  permutation of the coordinates, hence to permutation equivalent codes.

Via the isomorphism $\varphi$, we may transfer the Hamming metric
from $K^n$ to $KG$. For $a\in KG$, we define
$\wt(a):=\wt(\varphi(a))$. So, from a coding theoretical point of
view, we can consider linear codes either in $KG$ or in $K^n$
without any difference. However, the algebraic structure of $KG$
allows us to consider codes with more structure than linearity.

\begin{dfn}[\cite{B,M}] {\rm
A {\it $G$-code}  is a right ideal
$\CC$ in the group algebra $KG$. If the group $G$ is cyclic (resp. abelian), then
the code $\CC$ is called a {\it cyclic {\rm(resp.} abelian{\rm)} $G$-code}.
In the case we do not specialize the group $G$ explicitly we briefly speak of a group code.
}
\end{dfn}

The restriction to right ideals is only for convention, which means that everything in the following may be stated equally  for left ideals.\\

The particular class of cyclic $G$-codes  is nothing else than the family of well known
cyclic codes. If $G$ is cyclic, hence generated by a certain $g \in G$, and the
isomorphism $\varphi$ sends $g^i\to e_{i+1}$ for all
$i\in\{0,\ldots,n-1\}$, then $\varphi(\CC)$ is \emph{cyclic}. In
fact, in this case $KG\cong K[x]/(x^n-1)$ via the map
$g\mapsto x +(x^n-1)$. Thus, a cyclic code turns out to be an ideal in the factor algebra $K[x]/(x^n-1)$,
 which is the classical definition. \\

Now let $\CC$ be a $G$-code with $n=|G|$. Observe that 
the right multiplication on $G$ by one of its elements, say $g$, induces a
permutation $\sigma_g\in S_n$ defined by
\begin{equation}\label{eq:act}
\sigma_g(i)=j \ \text{ iff } \ g_ig=g_j.
\end{equation}
Note that $g\mapsto \sigma_g$ is a faithful permutation
representation of $G$, which  depends on the chosen ordering of
$G$. If this map coincides with that chosen for $\varphi$, then
$\sigma(G):=\{\sigma_g \ | \ g\in G\}$ is a subgroup of ${\rm
PAut}(\varphi(\CC))$. This is due to the fact that a right ideal is
stable by multiplication on the right. Since the action of right
multiplication is regular, $\sigma(G)$ is a regular subgroup of $S_n$. \\

Suppose that $\CC$ is a linear code in $K^n$ admitting a regular
subgroup $G$ of ${\rm PAut}(\CC)$. Since $G$ is a group of
automorphisms, $\mathcal{C}$ becomes a right $KG$-module via the action
\begin{equation}\label{eq:prod}
c\cdot\left(\sum_{g\in G}a_gg\right):=\sum_{g\in G}a_gc^g
\end{equation}
for $c\in \mathcal{C}$ and $a_g\in K$, where $c^g \in \CC$ is the image of $c$ under the action of $g$. Moreover, as every regular
action of $G$ is isomorphic to the action of $G$ on itself given by
right multiplication, there is an ordering of $G$ such that
$\varphi^{-1}(\CC)$ is a $G$-code in $KG$. 
Thus we have proved, in our framework, the known characterization of
group codes.

\begin{thm}[\cite{BdRS}] \label{BRS}
Let $G$ be a group of order $n$ and let $\mathcal{C}$ be a linear code
in $K^n$. Then $\mathcal{C}$ is a $G$-code if and only if $G$ is
isomorphic to a regular subgroup $H$ of ${\rm PAut}(\mathcal{C})$.
\end{thm}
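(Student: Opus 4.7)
The plan is to prove each implication separately; both directions are essentially foreshadowed by the discussion preceding the statement, so the main task is to tighten the argument and pin down the bookkeeping between abstract and concrete regular actions.

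For the forward direction ($\Rightarrow$), I would start with $\CC$ a right ideal in $KG$, fix an ordering $g_1,\ldots,g_n$ of $G$ and the associated isomorphism $\varphi$ from \eqref{eq:iso}, and define $\sigma_g \in S_n$ by \eqref{eq:act}. Associativity in $G$ immediately gives $\sigma_{gh}=\sigma_g \sigma_h$, and the unique solvability of $g_i g = g_j$ (by $g = g_i^{-1} g_j$) yields both the injectivity of $g \mapsto \sigma_g$ and the regularity of the action of $\sigma(G)=\{\sigma_g \mid g \in G\}$ on $\{1,\ldots,n\}$. The key compatibility is the identity $\varphi(a \cdot g)=\varphi(a)^{\sigma_g}$ for all $a \in KG$, which follows directly from $\sum_i a_i g_i \cdot g = \sum_i a_i g_{\sigma_g(i)}$. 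Since $\CC \cdot g \subseteq \CC$, this gives $\varphi(\CC)^{\sigma_g}=\varphi(\CC)$, so $\sigma(G)$ is a regular subgroup of ${\rm PAut}(\varphi(\CC))$ isomorphic to $G$.

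For the reverse direction ($\Leftarrow$), suppose $H \leq {\rm PAut}(\CC)$ is regular and fix an isomorphism $\iota \colon G \xrightarrow{\sim} H$. The essential input is that any regular action of a finite group is equivalent, up to relabeling of the underlying set, to its regular action on itself by right multiplication. Concretely, picking any base point $i_0 \in \{1,\ldots,n\}$, the bijection $g \mapsto i_0^{\iota(g)}$ produces a permutation $\tau \in S_n$ and a matching ordering of $G$ such that the conjugate $H^\tau$ coincides with $\sigma(G)$, where $\sigma$ is built from \eqref{eq:act} using that ordering. Equivalently, after reordering coordinates by $\tau$, the action of $H^\tau$ on $\CC^\tau$ is exactly that induced by right multiplication in $G$.

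It then remains to check that the preimage $\varphi^{-1}(\CC^\tau)$ is a right ideal in $KG$, with respect to the $\varphi$ attached to the new ordering. Applying the compatibility identity from the forward direction, the inclusion $\sigma(G) \leq {\rm PAut}(\CC^\tau)$ translates into $\varphi^{-1}(\CC^\tau)\cdot g \subseteq \varphi^{-1}(\CC^\tau)$ for every $g \in G$, and $K$-linearity extends this to closure under right multiplication by all of $KG$. Hence $\CC$ is permutation equivalent to a $G$-code, which is the content of the claim. The only genuine subtlety is the bookkeeping that identifies the abstract regular action of $H$ with the concrete permutation action \eqref{eq:act}; once the appropriate $\tau$ and matching ordering are fixed, everything else is a routine unpacking of the definitions already set up in the preceding discussion.
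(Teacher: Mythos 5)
Your proposal is correct and follows essentially the same route as the paper: the forward direction via the regular permutation representation $g\mapsto\sigma_g$ induced by right multiplication and the compatibility $\varphi(a\cdot g)=\varphi(a)^{\sigma_g}$, and the converse by identifying the regular action of $H$ with the right regular action of $G$ on itself through a choice of base point and ordering, so that $\varphi^{-1}(\CC)$ becomes a right ideal. You merely make explicit the bookkeeping (the permutation $\tau$ and the matching ordering) that the paper leaves implicit in the sentence ``as every regular action of $G$ is isomorphic to the action of $G$ on itself given by right multiplication.''
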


We would like to mention here that a $G$-code may also be an $H$-code where $H$ is not isomorphic to $G$. For instance, the
binary extended [24,12,8] Golay code is a $G$-code for the symmetric group $S_4$ \cite{BLM} and the dihedral group $D_{24}$  \cite{LH}.
Furthermore, there are abelian $G$-codes which are not group codes for cyclic groups. 
As an example may serve 
 the binary extended $[8,4,4]$ Hamming code. It is a $G=C_2\times C_4$ code, but not equivalent to a cyclic code (in fact, its automorphism group is isomorphic to ${\rm AGL}_3(2)$ that does not contain an element of order $8$). 

\begin{dfn} {\rm
Let $G$ be a finite group. A $K$-linear code $\CC$ is called a {\it quasi-$G$ code of index $\ell$} if $\CC$ is a right $KG$-submodule of
$KG^\ell = KG \oplus \cdots \oplus KG$ ($\ell$-times) for some  $\ell \in \N$. A quasi group code is a quasi-$G$ code for some group $G$.
In the case that $G$ is  cyclic (resp. abelian) we call $\CC$ a {\it cyclic} (resp. {\it abelian}) {\it quasi group code}.}
\end{dfn}

Clearly, in the case $\ell =1$, a quasi $G$-code is a $G$-code, just by definition. The trivial quasi-$G$ codes over $K$ (i.e., $G=1$) are
nothing else than the linear codes over $K$.

\begin{rmk} \label{R1} {\rm If $\CC$ is a $G$-code, then $\CC$ is a quasi-$H$ code of index $\ell = \frac{|G|}{|H|}$ for any subgroup $H$ of $G$.
This follows immediately from the fact that $KG = \oplus_{t \in T} tKH$ if $T$ is a left transversal of $H$ in $G$.}
\end{rmk}

\begin{thm} Let $G$ be a group of order $|G|=\frac{n}{\ell}$ and let $\CC$ be a linear code in $K^n$. Then 
$\CC$ is a quasi-$G$ code of index $\ell$ if and only if $G$ is isomorphic to a subgroup $H$ of ${\rm PAut}(\mathcal{C})$ which acts freely
of index $\ell$ on the coordinates.
\end{thm}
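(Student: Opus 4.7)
The plan is to generalize Theorem~\ref{BRS} by working orbit by orbit: the $\ell=1$ case already tells us how to turn a single regular orbit of permutation automorphisms into a $KG$-module structure, and the index-$\ell$ case should decompose into $\ell$ such orbits handled in parallel.

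For the ``only if'' direction, assume $\CC$ is a right $KG$-submodule of $KG^\ell$. Choose an ordering $g_1,\ldots,g_m$ of the elements of $G$ (so $m=|G|=n/\ell$) and extend the isomorphism $\varphi$ of~\eqref{eq:iso} block-wise to a $K$-linear isomorphism $KG^\ell\to K^n$ that sends, in each of the $\ell$ blocks, $g_i$ to the corresponding standard basis vector. Exactly as in~\eqref{eq:act}, right multiplication by $g\in G$ on each summand becomes a permutation $\sigma_g$ of the $n$ coordinates; the map $g\mapsto\sigma_g$ is an injective homomorphism $G\to S_n$ whose image $H$ stabilises $\CC$ (since $\CC$ is a $KG$-submodule) and acts on $\{1,\ldots,n\}$ as $\ell$ disjoint copies of the regular representation of $G$. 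In particular $H\leq {\rm PAut}(\CC)$ has $\ell$ orbits each of size $|G|$ and acts freely, i.e., acts freely of index~$\ell$.

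For the ``if'' direction, let $H\leq{\rm PAut}(\CC)$ be isomorphic to $G$ and act freely on $\{1,\ldots,n\}$ with orbits $O_1,\ldots,O_\ell$, each of cardinality $|G|$. On each orbit the action is both free and transitive, hence regular, so after choosing a representative $i_j\in O_j$ the map $h\mapsto i_j^h$ is a bijection $H\to O_j$. Gluing these $\ell$ bijections and using a fixed ordering of $H$ to enumerate each block produces a permutation $\sigma\in S_n$ such that the conjugate action of $H$ on $K^n$ coincides with the diagonal right-multiplication action of $H$ on $KH^\ell$. Since ${\rm PAut}(\CC^\sigma)={\rm PAut}(\CC)^\sigma$ still contains (the image of) $H$, the relabelled code $\CC^\sigma$ is stable under this diagonal action; that is, it is a right $KH$-submodule of $KH^\ell$. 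Combined with the isomorphism $G\cong H$, this exhibits $\CC$ as a quasi-$G$ code of index~$\ell$.

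The main obstacle is essentially bookkeeping: one must be careful to set up the $\ell$ bijections $H\to O_j$ coherently (with a single common ordering of $H$) so that the resulting $H$-action is literally the diagonal right-multiplication on $KH^\ell$, and not some twisted variant that would obscure the $KH$-module structure. Beyond that, the argument is a direct transplant of the $\ell=1$ proof underlying Theorem~\ref{BRS}, and in particular requires no semisimplicity or other structural hypothesis on $G$ or $K$.
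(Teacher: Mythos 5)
Your argument is correct and follows essentially the same route as the paper: the ``only if'' direction observes that right multiplication acts regularly on each of the $\ell$ copies of $G$ and hence freely of index $\ell$ on all coordinates, while the ``if'' direction fixes one representative per orbit and uses the regularity of the action on each orbit to identify $K^n$ with $KG^\ell$ as a right $KG$-module. The only cosmetic difference is that you package the identification as an explicit relabelling permutation $\sigma$ and pass to $\CC^\sigma$, whereas the paper identifies the basis vectors $e_r$ with group elements directly; the content is identical.
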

\begin{proof} Suppose that $C$ is a right $KG$-submodule of $A=KG^\ell$. Note that $G$ acts regularly on the index set $\{g \in G\}$ of each component
$KG$. Hence $G$ acts freely of index $\ell$ on the set of all coordinates of $A$. Furthermore, the matrix group
$P(G) = \{P(g) \mid g \in G\}$ induced by the right action of $G$ on $A$ leaves $\CC$ invariant since $\CC$ is a right $KG$-submodule of $A$.
Hence $G \cong P(G)$ is a subgroup of ${\rm PAut}(\CC)$ which acts freely of index $\ell$ on the coordinates.

Now suppose that  $G$ is isomorphic to a subgroup $H$ of ${\rm PAut}(\mathcal{C})$ which acts freely
of index $\ell$ on the coordinates. Let $P:G \longrightarrow H$ be an isomorphism from
$G$ to $H$. We define an action of $G$ on $\{1, \ldots, n\}$ by $ig=j$ if and only if $e_i P(G) = e_j$ where the $e_i$ is the standard basis of 
$K^n$. Since $P(G)=H$ acts freely of index $\ell$ on the coordinates, $G$ has exactly $l$ orbits ${\mathcal O}_j$ of length $|G|$ on
$\{1, \ldots, n\}$. Next we fix representatives $i_1, \ldots, i_\ell$ of ${\mathcal O}_1, \ldots, {\mathcal O}_\ell$ and identify
 $e_r \in {\mathcal O}_j$ with $g \in G$ if $i_jg = r$. Thus
$$ \oplus_{r \in {\mathcal O}_j}Ke_r = KG$$
 as a right $KG$-module and
$$K^n = \oplus_{j=1}^\ell(\oplus_{r \in {\mathcal O}_j}Ke_r) = \oplus_{j=1}^\ell KG = KG^\ell.$$
Since $H$ is a subgroup of ${\rm PAut}(\CC)$ the code $\CC$ is a right $KG$-submodule of $KG^\ell$.
\end{proof}

Note that a cyclic quasi group code of index $\ell$ is nothing else than a quasi-cyclic code in the classical sence of index $\ell$.
Thus, by Remark \ref{R1}, a nontrivial quasi group code is a quasi-cyclic code. 

\begin{cor} The class of nontrivial group codes of length $n>1$ coincides with the class of quasi-cyclic codes of length $n>1$. \end{cor}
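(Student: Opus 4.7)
The plan is to read off both inclusions from the paragraph immediately preceding the corollary, after upgrading Remark \ref{R1} from the $G$-code setting to the quasi setting.

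For the forward inclusion, let $\CC$ be a nontrivial quasi-$G$ code of index $\ell$, so that $G \neq 1$ and $\CC$ is a right $KG$-submodule of $KG^\ell$. First I would pick a nontrivial cyclic subgroup $H \leq G$, for instance $H = \langle g \rangle$ for any non-identity $g \in G$. The key observation is that Remark \ref{R1} extends verbatim to the quasi setting: if $T$ is a left transversal of $H$ in $G$, the decomposition $KG = \bigoplus_{t \in T} t\,KH$ of right $KH$-modules gives
\[
KG^\ell \;\cong\; KH^{\ell |G|/|H|}
\]
as right $KH$-modules. Restricting scalars, $\CC$ becomes a right $KH$-submodule of $KH^{\ell |G|/|H|}$, that is, a quasi-$H$ code of index $\ell|G|/|H|$. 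Since $H$ is cyclic and nontrivial, the identification recalled just before the corollary realizes $\CC$ as a quasi-cyclic code in the classical sense.

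The reverse inclusion is essentially tautological: a quasi-cyclic code of length $n > 1$ and index $\ell$ is by definition a right $KC$-submodule of $KC^\ell$ for $C$ cyclic of order $n/\ell > 1$, which is a nontrivial quasi group code.

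I do not foresee any substantive obstacle; the corollary is a clean consequence of Remark \ref{R1} together with the identification of cyclic quasi group codes with classical quasi-cyclic codes. The only delicate point is the bookkeeping around the word \emph{nontrivial}: the hypothesis $n > 1$ (forcing $G \neq 1$ on the left, and forcing $n/\ell > 1$ on the right once the degenerate index $\ell = n$ is excluded) is precisely what guarantees that the cyclic group appearing on each side is itself nontrivial, so that one lands in the classical notion of quasi-cyclicity rather than in the degenerate case where the acting group is trivial.
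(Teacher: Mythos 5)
Your proposal is correct and follows essentially the same route as the paper: the corollary is justified there by exactly the two sentences preceding it, namely the identification of cyclic quasi group codes with classical quasi-cyclic codes together with Remark \ref{R1} applied to a nontrivial cyclic subgroup, and you have merely spelled out the (immediate) extension of Remark \ref{R1} from index $1$ to index $\ell$ via $KG^\ell \cong KH^{\ell|G|/|H|}$. Your reading of ``nontrivial group codes'' as ``nontrivial quasi group codes'' is the intended one, and your care about the degenerate case $\ell=n$ is a reasonable piece of bookkeeping that the paper leaves implicit.
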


\begin{exa} {\rm
The extended binary Golay code of length $24$ can be seen in many
different ways. For example, it is a quasi group code for $D_6$ of
index $4$ (see Example \ref{exa:golay}), but also a quasi group code for $A_4$ of index $2$, and
even a group code for $S_4$, $D_{24}$ $C_3\times D_8$, $C_2\times
A_4$ and $(C_6\times C_2)\rtimes C_2$ \cite{BLM,DGTT}. }
\end{exa}

\begin{rmk} {\rm
A natural question is the following. What can we say about the code if a group
does not act freely? In this case, the situation gets more complicated
to be treated in a general framework, since there are many
possible configurations of the fixed points of the automorphisms
which give rise to different module structures. Some results in this
direction can be found in \cite{BW} for very small groups and in the case of self-dual
codes.}
\end{rmk}

\bigskip

\section{The concatenated structure of quasi group codes}

It it well-known (see \cite[Chapter VII, \S 12]{HB}) that every
group algebra $KG$ can be uniquely  decomposed (up to a permutation of the
components) into a direct sum of indecomposable two-sided ideals as
\begin{equation}\label{eq:decomposition}
KG=\underbrace{f_0KG}_{\mathcal{B}_0}\oplus \ \cdots \ \oplus \underbrace{f_sKG}_{\mathcal{B}_s},
\end{equation}
where the $\mathcal{B}_i$'s are called \emph{blocks} and the $f_i$'s are primitive orthogonal idempotents in the center of $KG$ with
$1 = f_0+ \cdots + f_s$. Note that each $\mathcal{B}_i$ is a Frobenius ring. 
 The image $\varphi(\mathcal{B}_i)$ of ${\mathcal B}_i$ under the map $\varphi$ defined as in \eqref{eq:iso} is a linear code of length $|G|$ over $K$, 
which is uniquely determined by $G$ (up to equivalence, from the choice of $\varphi$).\\
Moreover, for every $KG$-module $\CC$, we have a ``blockwise'' direct decomposition
\begin{equation}\label{eq:dec}\CC=\underbrace{\CC f_0}_{\CC_0}\oplus \ \cdots \ \oplus \underbrace{\CC f_s}_{\CC_s},\end{equation}
of $\CC$ into $KG$-modules $\CC_i =\CC f_i$.
Observe that $\CC_i$ is indeed a $KG$-module since $f_i$ lies in the center of $KG$.
Clearly, $\CC_i$ is a $\mathcal{B}_i$-module too.\\

Now let $G$ be a subgroup of $S_n$ which acts freely on $\{1,\ldots, n\}$. Let $m=|G|$ and $n=m\ell$. Then
$$\varphi^\ell: KG^\ell\to K^n,$$
where $\varphi$ is defined as in \eqref{eq:iso}, is an isomorphism
of vector spaces. This map can be restricted to $\mathcal{B}_i^\ell$
for every $i$, getting a $\mathcal{B}_i$-module isomorphism between
$\mathcal{B}_i^\ell$ and the $\mathcal{B}_i$-modules in $K^n$. 

\begin{thm}\label{thm:str}
Every quasi-$G$ code $\CC$ can be blockwise decomposed as
$$\CC=\CC_0\oplus \ \cdots \ \oplus \CC_s,$$
 where each $\CC_i$ is a linear code
(eventually trivial) of length $\ell$ over the ring $\mathcal{B}_i$,
i.e. a $\mathcal{B}_i$-submodule of $\mathcal{B}_i^\ell$.
\end{thm}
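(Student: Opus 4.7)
My plan is to exploit the central block decomposition of $KG$ coordinatewise on $KG^\ell$, then restrict it to $\CC$. By the preceding characterization, being a quasi-$G$ code of index $\ell$ means $\CC$ is a right $KG$-submodule of $KG^\ell$, so the whole statement reduces to understanding how the central idempotents $f_0,\ldots,f_s$ of \eqref{eq:decomposition} act on $KG^\ell$ and on $\CC$.

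First I would observe that since each $f_i$ lies in the center of $KG$, right multiplication by $f_i$ on $KG^\ell$ acts componentwise, so
$$KG^\ell f_i = (KGf_i)^\ell = \mathcal{B}_i^\ell,$$
and the relations $1=f_0+\cdots+f_s$ together with $f_if_j=\delta_{ij}f_i$ give the internal direct sum decomposition $KG^\ell=\mathcal{B}_0^\ell\oplus\cdots\oplus\mathcal{B}_s^\ell$ of right $KG$-modules.

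Next I would set $\CC_i:=\CC f_i$. Since $\CC$ is a right $KG$-submodule and $f_i$ is central, $\CC_i$ is again a right $KG$-submodule, and the action of any element of $KG$ on $\CC_i$ factors through multiplication by $f_i$, hence factors through $\mathcal{B}_i=f_iKG$. Thus $\CC_i$ is a $\mathcal{B}_i$-submodule of $\mathcal{B}_i^\ell$; via the restriction of $\varphi^\ell$, it is a linear code of length $\ell$ over the Frobenius ring $\mathcal{B}_i$, possibly the zero code. Writing $c=c\cdot 1=\sum_i cf_i$ for every $c\in\CC$ and using orthogonality of the $f_i$ to see that the sum is direct yields the blockwise decomposition $\CC=\CC_0\oplus\cdots\oplus\CC_s$.

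There is no substantial obstacle here once the previous theorem is in hand; the argument is essentially a formal consequence of centrality of the $f_i$. The only point that deserves a moment of care is the verification that $KG^\ell f_i$ really equals $\mathcal{B}_i^\ell$ under the $KG$-action inherited from the isomorphism $\varphi^\ell$ of the preceding paragraph, i.e., that the block idempotents act diagonally on the $\ell$ copies. This in turn is built into the definition of the $KG$-module structure on $KG^\ell$.
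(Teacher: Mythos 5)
Your argument is correct and is essentially identical to the paper's own proof: both apply the blockwise decomposition $\CC=\CC f_0\oplus\cdots\oplus\CC f_s$ from \eqref{eq:dec} and observe that $\CC f_i\leq (KG^\ell)f_i=\mathcal{B}_i^\ell$ using centrality and orthogonality of the idempotents. You simply spell out a bit more explicitly why the sum is direct and why each $\CC_i$ carries a right $\mathcal{B}_i$-module structure, which the paper leaves implicit.
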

\begin{proof} By definition,  $\CC$ is a right $KG$-submodule of $KG^\ell= KG \oplus \cdots \oplus KG$.
According to \eqref{eq:dec} there is a decomposition
$$\CC = \CC f_0 \oplus \cdots \oplus \CC f_s = \CC_0 \oplus \cdots \oplus \CC_s$$
where
$$\CC_i = \CC f_i \leq (KG \oplus \cdots \oplus KG)f_i \leq  KG f_i \oplus \cdots \oplus KG f_i = {\mathcal B}_i^\ell.$$
Thus $\CC_i$ is a code over ${\mathcal B}_i$ of length $\ell$. Clearly this code is linear over ${\mathcal B}_i$ (from the right) 
since $C_i$ is a right $KG$-module.
\end{proof}

For each $ i \in \{1,\ldots, \ell\}$ we fix a $K$-linear injection
$$\pi_i:\mathcal{B}_i\longrightarrow K^m.$$ Similar to the concatenation in Section \ref{sec:bac}
$${\mathcal B}_i \ \Box_{\pi_i} \ \mathcal{C}_i:=\{(\pi_i(c_1),\ldots,\pi_i(c_l))\ | \ (c_1,\ldots,c_l)\in 
\mathcal{C}_i\}\subseteq K^{m\ell}=K^n.$$
This is a linear code over $K$ which we also call  a
\emph{concatenated code}.\\

With the previous notations we have the following. 

\begin{thm}\label{thm:con} 
{\rm a)}
Every quasi-$G$ code $\CC$ can be decomposed
$$\CC=(\mathcal{B}_0 \ \Box_{\pi_0} \ \mathcal{C}_0) \oplus \ \cdots \ \oplus (\mathcal{B}_s \ \Box_{\pi_s} \ \mathcal{C}_s),$$
where each $\CC_i$ is a linear code (eventually trivial) of length
$\ell$ over a the block algebra $\mathcal{B}_i$.\\[0.2ex]
{\rm b)} If ${\rm d}(\mathcal{B}_0)\leq \ldots \leq {\rm
d}(\mathcal{B}_s)$, then the minimum distance of $\CC$ is bounded
below by
    $${\rm d}(\CC)\geq \min_{0\leq i\leq s}\{{\rm d}(\CC_i)\cdot{\rm d}(\mathcal{B}_0\oplus \cdots\oplus \mathcal{B}_i)\},$$
where the minimum distance of $\CC_i$ is defined exactly as for
linear codes over fields.
\end{thm}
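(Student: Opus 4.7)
My plan is to deduce part~(a) directly from the blockwise decomposition given by Theorem~\ref{thm:str} and then to prove part~(b) by a standard concatenation-style weight argument, refined by the block indexing.

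For part~(a), Theorem~\ref{thm:str} yields $\CC=\CC_0\oplus\cdots\oplus\CC_s$ inside $KG^\ell$ with $\CC_i\subseteq\mathcal{B}_i^\ell$. The natural thing to do is to fix the $K$-linear injections $\pi_i\colon\mathcal{B}_i\to K^m$ so that they agree with the restriction of $\varphi$ to each block; then by $K$-linearity of $\varphi$, for $a=\sum_i a_i\in KG$ with $a_i\in\mathcal{B}_i$ we have $\varphi(a)=\pi_0(a_0)+\cdots+\pi_s(a_s)$, and the images $\pi_i(\mathcal{B}_i)$ sit in internal direct sum inside $K^m$. Applying $\varphi^\ell$ coordinate by coordinate, the image of $\CC_i\subseteq\mathcal{B}_i^\ell$ is exactly the concatenated code $\mathcal{B}_i\,\Box_{\pi_i}\,\CC_i$, and because the $\pi_i(\mathcal{B}_i)$ are in direct sum in $K^m$, the sum of these concatenated codes is an internal direct sum in $K^n$, yielding the claimed decomposition.

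For part~(b), I would start with a nonzero codeword $c\in\CC$ and write it as $c=c_0+\cdots+c_s$ with $c_i\in\CC_i$. Letting $j$ be the \emph{largest} index for which $c_j\neq 0$, the outer word $c_j\in\CC_j$ has at least $d(\CC_j)$ nonzero positions among its $\ell$ coordinates in $\mathcal{B}_j$. At any such nonzero position $k$, the corresponding inner block of $c$ equals $c_0^{(k)}+\cdots+c_j^{(k)}$ with $c_j^{(k)}\neq 0$; this is therefore a nonzero element of the subcode $\mathcal{B}_0\oplus\cdots\oplus\mathcal{B}_j\subseteq KG$, so its Hamming weight (as an element of $K^m$ via $\varphi$) is at least $d(\mathcal{B}_0\oplus\cdots\oplus\mathcal{B}_j)$. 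Multiplying the two lower bounds and minimising over the possible $j\in\{0,\ldots,s\}$ gives the stated inequality.

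The subtlety I expect to be the most delicate is the choice of indexing in part~(b). The natural temptation is to take $j$ to be the smallest nonzero block index, but then the inner blocks would lie in $\mathcal{B}_j\oplus\cdots\oplus\mathcal{B}_s$ instead of $\mathcal{B}_0\oplus\cdots\oplus\mathcal{B}_j$, and the right-hand side would not match the statement. Taking $j$ maximal is precisely what forces the inner blocks to land inside the code $\mathcal{B}_0\oplus\cdots\oplus\mathcal{B}_j$ that appears in the bound; the ordering $d(\mathcal{B}_0)\leq\cdots\leq d(\mathcal{B}_s)$ is then a convenient normalisation rather than a technical ingredient. Apart from this, part~(a) is essentially bookkeeping to match the two direct-sum structures under $\varphi^\ell$, and part~(b) reduces to a single weight estimate.
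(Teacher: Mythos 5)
Your proof is correct and takes essentially the same route as the paper: part (a) is the blockwise decomposition of Theorem \ref{thm:str} transported through $\varphi^\ell$ with $\pi_i=\varphi|_{\mathcal{B}_i}$, and part (b) is exactly the Blokh--Zyablov inner/outer weight estimate that the paper simply cites from \cite{BZ}. Your choice of the \emph{largest} nonzero block index $j$, which forces the inner symbols into $\mathcal{B}_0\oplus\cdots\oplus\mathcal{B}_j$, is indeed the correct pivot and supplies the detail the paper leaves to the reference.
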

\begin{proof} a) This  follows directly by Theorem \ref{thm:str}.\\
b) The proof is exactly as in \cite{BZ} for linear codes over fields.
\end{proof}

\begin{rmk}{\rm In general the determination of the blocks ${\mathcal B}_i$ of $KG$ is a hard problem
and the algebraic structure may be quite complicated. If $G$ is abelian and the characteristic of $K$ does not divide $|G|$, 
then each block is explicitly known and isomorphic to a
finite extension field of $K$, by Wedderburn's Theorem. In \cite{BGSS} the authors exploit the concatenated structure of abelian quasi group codes
to find good codes. For instance, they construct a binary quasi group code for $G = C_3 \times C_3$ of index $4$ with parameters
$[36,6,16]$, which are optimal according to Grassl's list \cite{codetables}.
}
\end{rmk}

If ${\rm char}K$ does not divide $|G|$, then, by Maschke's Theorem, the algebra $KG$ is semisimple.  By Wedderburn's Theorem, we get that the blocks in \eqref{eq:decomposition} are isomorphic to full matrix rings over skew fields which are  field extensions of $K$ if $|K|$ is finite. 
The same happens, also in the non-semisimple case, for blocks of defect $0$ \cite[Chapter 16, Remark 16.3.6.]{Handbook}. 
In order to exploit the concatenation
we need an explicit isomorphism from the abstract matrix ring to $\mathcal{B}_i$. Clearly, the identity matrix is mapped to the idempotent $f_i$. Then one should look at the action of $G$ on the idempotent $f_i$, to get a representation of $G$ in the matrix ring. Note that the isomorphism is completely determined by the images of the generators of $G$. For blocks of positive defect, the isomorphism can be more complicated, but still the investigation of the structure of the Jacobson radical may help. The next examples serve as an illustration of different types of blocks and isomorphisms. We also emphasize how the concatenated construction allows to find optimal codes.

\begin{exa}\label{exa:golay}
{\rm We consider $G=\langle \alpha,\beta\mid \alpha^3=\beta^2=1,\beta\alpha\beta=\alpha^2\rangle\cong D_6$, the dihedral group of order $6$, and $K=\F_2$. Then
$$f_0=1+\alpha+\alpha^2 \text{ and } f_1=\alpha+\alpha^2$$
are the central primitive orthogonal idempotents of $KG$. Let us fix the ordering $\{1,\alpha,\alpha^2,\beta,\beta\alpha,\beta\alpha^2\}$. The the corresponding blocks $\mathcal{B}_0$ and $\mathcal{B}_1$ are the codes with generator matrices
$$B_0:=\begin{bmatrix}1&1&1&0&0&0\\
 0&0&0&1&1&1\end{bmatrix}\text{ and } B_1:=\begin{bmatrix}0&1&1&0&0&0\\
 1&1&0&0&0&0\\
 0&0&0&0&1&1\\
 0&0&0&1&1&0
 \end{bmatrix}.$$
The ring $\mathcal{B}_0$ is isomorphic to $\F_2+u\F_2$, with $u^2=0$, so that we get the map $\pi_0:\mathcal{B}_0\to \F_2^6$, 
$$1\mapsto B_{0,1},1+u\mapsto B_{0,2},$$
where $B_{0,j}$ is the $j$-th line of $B_0$. Codes over $\F_2+u\F_2$ are studied for example in \cite{AS,DS}.\\
The ring $\mathcal{B}_1$ is isomorphic to ${\rm M}_2(\F_2)$, so that we get the map $\pi_1:\mathcal{B}_1\to \F_2^6$, 
$$\begin{bmatrix}1&0\\0&1\end{bmatrix}\mapsto B_{1,1},\begin{bmatrix}1&1\\1&0\end{bmatrix}\mapsto B_{1,2}, \begin{bmatrix}0&1\\1&0\end{bmatrix}\mapsto B_{1,3}, \begin{bmatrix}1&0\\1&1\end{bmatrix}\mapsto B_{1,4},$$
where $B_{1,j}$ is the $j$-th line of $B_1$. Codes over ${\rm M}_2(\F_2)$ are studied for example in \cite{ASSY,OSB}.\\
A $G$-code $\CC\subseteq \F_2^{6\ell}$ can be then decomposed as $(\mathcal{B}_0 \ \Box_{\pi_0} \ \mathcal{C}_0) \oplus (\mathcal{B}_1 \ \Box_{\pi_1} \ \mathcal{C}_1)$, where $\CC_0\subseteq (\F_2+u\F_2)^\ell$ and $\CC_1\subseteq {\rm M}_2(\F_2)^\ell$. \\
Let us consider $\ell=4$. If $\CC_0$ is the code over $\F_2+u\F_2$ generated by
$$\begin{bmatrix}
1&0&1+u&u\\
0&1&u&1+u
\end{bmatrix}$$
and $\CC_1$ is the code over ${\rm M}_2(\F_2)$ generated by
$$\begin{bmatrix}
\begin{bmatrix}1&0\\0&1\end{bmatrix}&
\begin{bmatrix}0&0\\0&0\end{bmatrix}&
\begin{bmatrix}1&1\\0&0\end{bmatrix}&
\begin{bmatrix}0&1\\1&0\end{bmatrix}\\
\begin{bmatrix}0&0\\0&0\end{bmatrix}&
\begin{bmatrix}1&0\\0&1\end{bmatrix}&
\begin{bmatrix}1&1\\1&0\end{bmatrix}&
\begin{bmatrix}0&0\\1&0\end{bmatrix}
\end{bmatrix},$$
then the code $\CC=(\mathcal{B}_0 \ \Box_{\pi_0} \ \mathcal{C}_0) \oplus (\mathcal{B}_1 \ \Box_{\pi_1} \ \mathcal{C}_1)$ is a $[24,12,8]$ code, with generator matrix
$$G=\left[\begin{smallmatrix}
1&0&0&0&0&0&0&0&0&0&0&0&1&1&0&0&0&1&1&1&1&0&1&0\\
0&1&0&0&0&0&0&0&0&0&0&0&0&1&1&0&1&0&1&1&1&1&0&0\\
0&0&1&0&0&0&0&0&0&0&0&0&1&0&1&1&0&0&1&1&1&0&0&1\\
0&0&0&1&0&0&0&0&0&0&0&0&0&0&1&1&1&0&0&1&0&1&1&1\\
0&0&0&0&1&0&0&0&0&0&0&0&0&1&0&0&1&1&1&0&0&1&1&1\\
0&0&0&0&0&1&0&0&0&0&0&0&1&0&0&1&0&1&0&0&1&1&1&1\\
0&0&0&0&0&0&1&0&0&0&0&0&0&1&0&1&1&1&0&1&1&0&0&1\\
0&0&0&0&0&0&0&1&0&0&0&0&0&0&1&1&1&1&1&0&1&0&1&0\\
0&0&0&0&0&0&0&0&1&0&0&0&1&0&0&1&1&1&1&1&0&1&0&0\\
0&0&0&0&0&0&0&0&0&1&0&0&1&1&1&0&1&0&0&0&1&0&1&1\\
0&0&0&0&0&0&0&0&0&0&1&0&1&1&1&0&0&1&0&1&0&1&0&1\\
0&0&0&0&0&0&0&0&0&0&0&1&1&1&1&1&0&0&1&0&0&1&1&0
\end{smallmatrix}\right].$$
Note that $\CC$ is a code with the best possible minimum distance for the length $24$ and dimension $12$, by the Griesmer bound (see \cite{codetables}). Moreover, it is self-dual, hence equivalent to the extended binary Golay code.}
\end{exa}

\begin{exa}
{\rm Let us consider $G=\langle \alpha,\beta,\gamma\mid \alpha^3=\beta^2=\gamma^3=1,\beta\alpha\beta=\alpha^2, [\alpha,\gamma] =[\beta,\gamma]=1\rangle\cong S_3\times C_3$ and $K=\F_2$. Let $A=\langle \alpha\rangle=\{1,\alpha,\alpha^2\}$ and $A^\star=\{\alpha,\alpha^2\}$.
Then
$$f_0=\sum_{g\in A\cup A\gamma\cup A\gamma^2}g, \ \ f_1=\sum_{g\in A\gamma\cup A\gamma^2}g, \ \ f_2=\sum_{g\in A^\star\cup A^\star\gamma\cup A^\star\gamma^2}g \ \text{ and } \ f_3=\sum_{g\in A^\star\gamma\cup A^\star\gamma^2}g$$
are the central primitive orthogonal idempotents of $KG$. Let us fix the ordering 
$$\{1,\alpha,\alpha^2,\gamma,\alpha\gamma,\alpha^2\gamma,\gamma^2,\alpha\gamma^2,\alpha^2\gamma^2,\beta,\alpha\beta,\alpha^2\beta,\beta\gamma,\alpha\beta\gamma,\alpha^2\beta\gamma,\beta\gamma^2,\alpha\beta\gamma^2,\alpha^2\beta\gamma^2\}.$$ 
The corresponding blocks $\mathcal{B}_0$, $\mathcal{B}_1$, $\mathcal{B}_2$ and $\mathcal{B}_3$ are the codes with generator matrices
$$B_0:=\left[\begin{smallmatrix}
1&1&1&1&1&1&1&1&1&0&0&0&0&0&0&0&0&0\\
0&0&0&0&0&0&0&0&0&1&1&1&1&1&1&1&1&1
\end{smallmatrix}\right], \ \ B_1:=\left[\begin{smallmatrix}
1&1&1&0&0&0&1&1&1&0&0&0&0&0&0&0&0&0\\
0&0&0&1&1&1&1&1&1&0&0&0&0&0&0&0&0&0\\
0&0&0&0&0&0&0&0&0&1&1&1&0&0&0&1&1&1\\
0&0&0&0&0&0&0&0&0&0&0&0&1&1&1&1&1&1
\end{smallmatrix}\right],$$
$$B_2:= \left[\begin{smallmatrix}
1&0&1&1&0&1&1&0&1&0&0&0&0&0&0&0&0&0\\
0&1&1&0&1&1&0&1&1&0&0&0&0&0&0&0&0&0\\
0&0&0&0&0&0&0&0&0&1&0&1&1&0&1&1&0&1\\
0&0&0&0&0&0&0&0&0&0&1&1&0&1&1&0&1&1
\end{smallmatrix}\right], \ \ B_3:=\left[\begin{smallmatrix}
1&0&1&0&0&0&1&0&1&0&0&0&0&0&0&0&0&0\\
0&1&1&0&0&0&0&1&1&0&0&0&0&0&0&0&0&0\\
0&0&0&1&0&1&1&0&1&0&0&0&0&0&0&0&0&0\\
0&0&0&0&1&1&0&1&1&0&0&0&0&0&0&0&0&0\\
0&0&0&0&0&0&0&0&0&1&0&1&0&0&0&1&0&1\\
0&0&0&0&0&0&0&0&0&0&1&1&0&0&0&0&1&1\\
0&0&0&0&0&0&0&0&0&0&0&0&1&0&1&1&0&1\\
0&0&0&0&0&0&0&0&0&0&0&0&0&1&1&0&1&1
\end{smallmatrix}\right].$$
The ring $\mathcal{B}_0$ is isomorphic to $\F_2+u\F_2$, with $u^2=0$, so that we get the map $\pi_0:\mathcal{B}_0\to \F_2^{18}$, 
$$1\mapsto B_{0,1},1+u\mapsto B_{0,2}.$$
The ring $\mathcal{B}_1$ is isomorphic to $\F_4+u\F_4$, with $u^2=0$ and $\F_4=\F_2[\zeta]$, so that we get the map $\pi_1:\mathcal{B}_1\to \F_2^{18}$, 
$$\zeta\mapsto B_{1,1}, 1\mapsto B_{1,2}, \zeta+u\mapsto B_{1,3}, 1+u(1+\zeta)\mapsto B_{1,4}.$$
The ring $\mathcal{B}_2$ is isomorphic to ${\rm M}_2(\F_2)$, so that we get the map $\pi_2:\mathcal{B}_2\to \F_2^{18}$, 
$$\begin{bmatrix}1&1\\1&0\end{bmatrix}\mapsto B_{2,1}, \begin{bmatrix}1&0\\0&1\end{bmatrix}\mapsto B_{2,2}, \begin{bmatrix}1&0\\1&1\end{bmatrix}\mapsto B_{2,3}, \begin{bmatrix}0&1\\1&0\end{bmatrix}\mapsto B_{2,4}.$$
The ring $\mathcal{B}_3$ is isomorphic to ${\rm M}_2(\F_4)$, with $\F_4=\F_2[\zeta]$, so that we get the map $\pi_3:\mathcal{B}_3\to \F_2^{18}$, 
$$\begin{bmatrix}\zeta^2&0\\ \zeta&1\end{bmatrix}\mapsto B_{3,1}, \begin{bmatrix}\zeta&0\\0&\zeta\end{bmatrix}\mapsto B_{3,2}, \begin{bmatrix}\zeta&0\\1&\zeta^2\end{bmatrix}\mapsto B_{3,3}, \begin{bmatrix}1&0\\0&1\end{bmatrix}\mapsto B_{4,4}$$
$$\begin{bmatrix}1&1\\ \zeta&1\end{bmatrix}\mapsto B_{3,5}, \begin{bmatrix}\zeta^2&\zeta^2\\\zeta&\zeta^2\end{bmatrix}\mapsto B_{3,6}, \begin{bmatrix}\zeta^2&\zeta^2\\1&\zeta^2\end{bmatrix}\mapsto B_{3,7}, \begin{bmatrix}1&1\\0&1\end{bmatrix}\mapsto B_{4,8}.$$
\mbox{}\\
A $G$-code $\CC\subseteq \F_2^{18\ell}$ can then be  decomposed as $(\mathcal{B}_0 \ \Box_{\pi_0} \ \mathcal{C}_0) \oplus (\mathcal{B}_1 \ \Box_{\pi_1} \ \mathcal{C}_1)\oplus (\mathcal{B}_2 \ \Box_{\pi_2} \ \mathcal{C}_2)\oplus (\mathcal{B}_3 \ \Box_{\pi_3} \ \mathcal{C}_3)$, where $\CC_0\subseteq (\F_2+u\F_2)^\ell$, $\CC_1\subseteq (\F_4+u\F_4)^\ell$, $\CC_2\subseteq {\rm M}_2(\F_2)^\ell$ and $\CC_3\subseteq {\rm M}_2(\F_4)^\ell$. 
\mbox{} \\
Let us consider $\ell=3$. If $\CC_3$ is the code over ${\rm M}_2(\F_4)$ generated by
$$\begin{bmatrix}
\begin{bmatrix}1&0\\0&1\end{bmatrix}&
\begin{bmatrix}1&0\\\zeta&\zeta\end{bmatrix}&
\begin{bmatrix}0&1\\1&1\end{bmatrix}
\end{bmatrix},$$
then  $\CC= \mathcal{B}_3 \ \Box_{\pi_3} \ \mathcal{C}_3$ is a $[54,8,24]$ code. Note that this  code has the best possible minimum distance for linear codes of length $54$ and dimension $8$, by the Griesmer bound (see \cite{codetables}).
}
\end{exa}

\begin{exa}
{\rm Next we  consider $G=\langle \alpha,\beta\mid \alpha^{11}=\beta^2=1,\beta\alpha\beta=\alpha^{10}\rangle\cong D_{22}$, the dihedral group of order $22$, and $K=\F_2$. Then
$$f_0=1+\alpha+\ldots+\alpha^{10} \text{ and } f_1=\alpha+\ldots+\alpha^{10}$$
are the central primitive orthogonal idempotents of $KG$. Let us fix the ordering $$\{1,\alpha,\ldots,\alpha^{10},\beta,\beta\alpha,\ldots, \beta\alpha^{10}\}.$$ Then the corresponding blocks $\mathcal{B}_0$ and $\mathcal{B}_1$ are  codes with generator matrices
$$B_0:=\begin{bmatrix}1&\cdots&1&0&\cdots&0\\
 0&\cdots&0&1&\cdots&1\end{bmatrix}\text{ and } B_1:=\begin{bmatrix}
 0&1&\cdots&1&1&0&&\cdots&&0\\
 1&0&\cdots&1&1&0&&\cdots&&0\\
 \vdots& &\ddots&&\vdots&\vdots&&&&\vdots\\
 1&1&\cdots&0&1&0&&\cdots&&0\\
 0&&\cdots&&0&0&1&\cdots&1&1\\
 0&&\cdots&&0&1&0&\cdots&1&1\\
 \vdots&&&&\vdots&\vdots& &\ddots&&\vdots\\
 0&&\cdots&&0&1&1&\cdots&0&1
 \end{bmatrix}.$$
The ring $\mathcal{B}_0$ is isomorphic to $\F_2+u\F_2$, with $u^2=0$, so that we get the map $\pi_0:\mathcal{B}_0\to \F_2^{22}$, 
$$1\mapsto B_{0,1},1+u\mapsto B_{0,2}.$$
The ring $\mathcal{B}_1$ is isomorphic to ${\rm M}_2(\F_{32})$, where $\F_{32}=\F_2[\zeta]$ with $\zeta^5+\zeta^2+1=0$, so that we get the map $\pi_1:\mathcal{B}_1\to \F_2^{22}$, 
$$\begin{bmatrix}1&0\\0&1\end{bmatrix}\mapsto B_{1,1},\begin{bmatrix}1&\zeta^9\\\zeta^9&\zeta\end{bmatrix}\mapsto B_{1,2}, \begin{bmatrix}0&1\\1&0\end{bmatrix}\mapsto B_{1,11}.$$
A $G$-code $\CC\subseteq \F_2^{22\ell}$ can then be  decomposed as $(\mathcal{B}_0 \ \Box_{\pi_0} \ \mathcal{C}_0) \oplus (\mathcal{B}_1 \ \Box_{\pi_1} \ \mathcal{C}_1)$, where $\CC_0\subseteq (\F_2+u\F_2)^\ell$ and $\CC_1\subseteq {\rm M}_2(\F_{32})^\ell$.\\
Let us consider $\ell=5$. If  $\CC_1$ is the code over ${\rm M}_2(\F_{32})$ generated by
$$\begin{bmatrix}
\begin{bmatrix}
\zeta^{18}&1\\
\zeta^{9}&\zeta^{21}
\end{bmatrix}&
\begin{bmatrix}
\zeta^{15}&\zeta^{5}\\
\zeta^{23}&\zeta^{12}
\end{bmatrix}&
\begin{bmatrix}
\zeta&\zeta^{28}\\
\zeta^{20}&\zeta^{24}
\end{bmatrix}&
\begin{bmatrix}
\zeta^{25}&\zeta^{20}\\
\zeta^{11}&\zeta^{15}
\end{bmatrix}&
\begin{bmatrix}
\zeta^{19}&\zeta^{28}\\
\zeta&\zeta^{18}
\end{bmatrix}\\
\begin{bmatrix}
\zeta^{25}&\zeta^{27}\\
\zeta^{29}&\zeta^{3}
\end{bmatrix}&
\begin{bmatrix}
\zeta^{3}&\zeta^{2}\\
\zeta^{5}&\zeta^{19}
\end{bmatrix}&
\begin{bmatrix}
\zeta^{26}&\zeta^{26}\\
\zeta^{16}&\zeta^{9}
\end{bmatrix}&
\begin{bmatrix}
\zeta^{23}&\zeta^{23}\\
\zeta^{15}&\zeta^{8}
\end{bmatrix}&
\begin{bmatrix}
\zeta^{23}&\zeta^{17}\\
\zeta&\zeta^{7}\\
\end{bmatrix}
\end{bmatrix},$$
then the code $\CC=\mathcal{B}_1 \ \Box_{\pi_1} \ \mathcal{C}_1$ is a $[110,40,22]$ code.
Note that the best known minimum distance of a linear binary $[110,40]$ code is $24$ (see \cite{codetables}).}
\end{exa}

\bigskip

\section{Self-duality of quasi group codes}

  Let $K$ be a finite field and let $G$ be a finite
group of order $n$. For $\ell\in \mathbb{N}$, there is a Euclidean bilinear form  on $KG^\ell = KG \oplus \cdots \oplus KG$ 
which is defined by
$$\langle \sum_{i=1}^\ell a_i, \sum_{i=1}^\ell b_i \rangle =
\sum_{i=1}^\ell  \varphi(a_i)\cdot \varphi(b_i)$$ where $\varphi$ is
the isomorphism defined in \eqref{eq:iso} and  $\varphi(a_i)\cdot
\varphi(b_i)$ is the standard inner product on $K^n$. Thus $KG^\ell
= KG \perp \cdots \perp KG$.

For a linear code $\mathcal{C}$ over $K$ of length $n\ell$, the
\emph{dual code} is classically defined as the linear code
$\mathcal{C}^\perp=\{v\in K^{n\ell} \mid v\cdot c=0 \text{ for all }
c\in \mathcal{C}\}$. The definition is the same for a quasi-$G$
code, but it can be formulated also in terms of the above Euclidean
bilinear form: if $\mathcal{C}$ is a $KG$-submodule of $KG^\ell$, then
$$\mathcal{C}^\perp=\{v\in KG^\ell \mid \langle v,c\rangle=0\text{ for all }
c\in \mathcal{C}\}.$$ The two definitions coincide via the
isomorphism $\varphi^\ell$. In both cases, $\mathcal{C}$ is called
\emph{self-dual} if $\mathcal{C}=\mathcal{C}^\perp$.

However, another notion of duality is used in representation theory, as we
 see in the following definition.

\begin{dfn} \label{dualmodule} {\rm Let $\mathcal{V}$ be a (right) $KG$-module. The vector space
$\ho_K(\mathcal{V},K)$ of all $K$-linear maps  from $\mathcal{V}$ to
$K$ becomes a right $KG$-module by
$$ (\alpha g)v = \alpha(vg^{-1}) $$
for $v \in \mathcal{V}, g \in G$ and $\alpha \in
\ho_K(\mathcal{V},K)$. This module is denoted by $\mathcal{V}^*$ and
called the \emph{dual module} of $\mathcal{V}$. If $\mathcal{V}\cong
\mathcal{V}^*$, we say that $\mathcal{V}$ is a \emph{self-dual
$KG$-module}. }
\end{dfn}

Observe that the trivial $KG$-module $K_G$ is always self-dual. Furthermore, 
for any $KG$-module $\mathcal{V}$, we have $\dim \mathcal{V}=\dim \mathcal{V}^*$.\\

In \cite{Wil} it has been shown that a self-dual group code in $KG$
exists if and only if the characteristic of $K$ is 2 and $|G|$ is
even. For quasi group codes we have the following generalization.
Note that for finite fields of odd cardinality the group $G$ does
not play any role.

\begin{thm}\label{thm:sd} For any finite group $G$ there exists a self-dual quasi-$G$ code over $K$ of index $\ell$ if and only if
one of the following holds true.
\begin{itemize}
\item[(i)] $|K| \equiv 1 \bmod 4$ and $ 2 \mid \ell$.
\item[(ii)] $|K| \equiv 3 \bmod 4$ and $ 4 \mid \ell$.
\item[(iii)] $|K|$ is even and $ 2 \mid \ell$ or $ 2 \mid |G|$.
\end{itemize}
\end{thm}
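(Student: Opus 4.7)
The plan is to prove both directions, with sufficiency by explicit construction and necessity by extracting a self-dual subspace of $K^\ell$ from $\CC$.

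For sufficiency, I build a self-dual quasi-$G$ code at the minimal admissible index in each case and extend to any larger $\ell$ by orthogonal direct sum. In characteristic $2$ with $2\mid\ell$, the diagonal $\{(a,a):a\in KG\}\leq KG^2$ is self-dual since $2B(a,c)=0$. For case (i), $|K|\equiv 1\pmod 4$, I fix $i\in K$ with $i^2=-1$ and take $\{(a,ia):a\in KG\}\leq KG^2$, where self-orthogonality reduces to $(1+i^2)B(a,c)=0$. For case (ii), $|K|\equiv 3\pmod 4$, I choose $\alpha,\gamma\in K$ with $\alpha^2+\gamma^2=-1$ (always possible over a finite field, since every element is a sum of two squares) and take $\{(a,b,\alpha a+\gamma b,\gamma a-\alpha b):a,b\in KG\}\leq KG^4$; self-orthogonality then boils down to $MM^{T}=-I$ for $M=\begin{pmatrix}\alpha & \gamma\\ \gamma & -\alpha\end{pmatrix}$. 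The remaining subcase of (iii), namely characteristic $2$ with $2\mid|G|$, is handled by invoking \cite{Wil} at $\ell=1$ and, for odd $\ell>1$, direct-summing with a diagonal code on the remaining even number of coordinates.

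For necessity, assume $\CC\leq V=KG^\ell$ is self-dual. The identity $\dim_K\CC=|G|\ell/2$ forces $|G|\ell$ to be even, which is exactly (iii) in characteristic $2$; so I focus on odd characteristic. Set $\sigma=\sum_{g\in G}g$, write $\iota:K^\ell\to V$ for $c\mapsto c\sigma$, and let $\epsilon^\ell:V\to K^\ell$ be the coordinate-wise augmentation. A direct computation shows $\langle c,c'\sigma\rangle=\epsilon^\ell(c)\cdot\epsilon^\ell(c')$ and $\langle b\sigma,d\rangle=b\cdot\epsilon^\ell(d)$, where the dot on the right is the standard inner product on $K^\ell$. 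Since $c'\sigma\in\CC$ by right $KG$-stability and $\CC\subseteq\CC^\perp$, this says that $A:=\epsilon^\ell(\CC)\leq K^\ell$ is a self-orthogonal subspace under the dot product, and that $\iota^{-1}(\CC\cap V^G)=A^\perp$, hence $\dim(\CC\cap V^G)=\ell-\dim A$. In the semisimple case $\mathrm{char}\,K\nmid|G|$, the form restricted to $V^G=K^\ell\sigma$ is a nonzero scalar multiple of the dot product, so $\CC\cap V^G$ is totally isotropic in $V^G\cong K^\ell$, giving $\dim(\CC\cap V^G)\leq\ell/2$ and hence $\dim A=\ell/2$. Thus $A$ is self-dual in $K^\ell$, and the classical characterization of self-dual subspaces of $K^\ell$ under the dot product delivers exactly (i) and (ii).

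The main obstacle is the non-semisimple case $\mathrm{char}\,K\mid|G|$, where the form on $V^G$ vanishes identically and the above isotropy bound collapses. I plan to repair the argument blockwise: by Theorem~\ref{thm:str}, $\CC=\bigoplus_i\CC_i$ with $\CC_i\leq\mathcal{B}_i^\ell$, and since the bar-involution $\overline{\sum a_g g}=\sum a_g g^{-1}$ fixes the principal block $\mathcal{B}_0$, the piece $\CC_0$ is self-dual inside $\mathcal{B}_0^\ell$. Within $\mathcal{B}_0^\ell$ one then replaces $V^G$ by a suitable non-degenerate layer of the Loewy filtration on which the induced form is again a nonzero scalar multiple of the dot product on a copy of $K^\ell$ and $\CC_0$ still induces a self-dual subspace. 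Tracking the discriminant through this reduction so that it recovers precisely conditions (i) and (ii) is the technical heart of the proof.
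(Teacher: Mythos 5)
Your sufficiency constructions are all correct and essentially coincide with the paper's: the diagonal code in characteristic $2$, the code $\{(a,xa)\}$ for $|K|\equiv 1 \bmod 4$, a $2$-generator code in $KG^4$ built from a solution of $\alpha^2+\gamma^2=-1$ for $|K|\equiv 3\bmod 4$ (your matrix $M$ with $MM^T=-I$ is a cosmetic variant of the paper's $\{(xa,ya,a,0),(0,-b,yb,xb)\}$), and the reduction to \cite{Wil} plus orthogonal sums for the remaining cases. Your necessity argument via the augmentation map is also a genuinely different and rather clean route in the semisimple case: showing that $\epsilon^\ell(\CC)$ is self-orthogonal in $K^\ell$ and that $\CC\cap V^G$ is totally isotropic in the nondegenerate space $V^G\cong K^\ell$ forces $\epsilon^\ell(\CC)$ to be self-dual of dimension $\ell/2$, and then Pless's classification of self-dual codes in $K^\ell$ delivers (i) and (ii) at once.

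However, there is a genuine gap exactly where you flag it: when $\operatorname{char}K$ is odd and divides $|G|$, the form on $V^G=K^\ell\sigma$ is $|G|$ times the dot product, hence identically zero, and your isotropy bound $\dim(\CC\cap V^G)\le \ell/2$ disappears; without it you cannot even conclude $2\mid\ell$, let alone the sign condition distinguishing (i) from (ii). The proposed repair (pass to the principal block and find a ``non-degenerate layer of the Loewy filtration'' carrying a scalar multiple of the dot product, then track the discriminant) is not carried out, and it is far from clear that it can be: the relevant quotients of the radical filtration need not be multiplicity-one copies of $K^\ell$, and the induced form on a subquotient is only defined after checking compatibility of $\CC$ with the filtration. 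The paper sidesteps all of this with one move you are missing: restrict to a Sylow $2$-subgroup $T$ of $G$. Since $|K|$ is odd, $KT$ is \emph{always} semisimple, so one can work with the $K_T$-isotypic component $\mathcal{M}$ of $V|_T$, which has dimension $|G:T|\ell$ and Gram matrix $|T|\cdot I$ (nondegenerate because $|T|$ is a power of $2$). The multiplicity of $K_T$ in $V|_T$ being even (a consequence of $V/\CC\cong\CC^*$) gives $2\mid\ell$ since $|G:T|$ is odd, and the discriminant computation $\det \mathrm{G}(\mathcal{M})\,(-1)^{\dim\mathcal{M}/2}\in K^{*2}$ applied to the totally isotropic subspace $\CC\cap\mathcal{M}$ yields $4\mid\ell$ when $|K|\equiv 3\bmod 4$. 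You should replace your $V^G$/principal-block plan with this Sylow restriction (or fully work out the block-theoretic alternative, which would be substantially harder).
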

\begin{proof}  Suppose that $|K|$ is odd. Let $\mathcal{C} \leq KG^\ell =: \mathcal{V}$ be a quasi group
 code and suppose that $\mathcal{C}=\mathcal{C}^\perp$
is self-dual. We argue now similar as in \cite{Wil}. First note that
$\mathcal{V}/\mathcal{C} = \mathcal{V}/\mathcal{C}^\perp \cong
\mathcal{C}^*$ as $KG$-modules. Thus the multiplicity of the trivial
$KG$-module $K_G$ as a composition factor of $\mathcal{V}$ is even,
since $K_G$ is a self-dual irreducible $KG$-module. In particular,
if $T$ is a Sylow $2$-subgroup of $G$, then the multiplicity of the
trivial $KT$-module $K_T$ in the restriction $\mathcal{V}|_T$ is
even. On the other hand, by Maschke's Theorem, the multiplicity of
$K_T$ in $KT$ is one. Thus the multiplicity of $K_T$ in $KG|_T$ is
$|G:T|$. It follows that the multiplicity of $K_T$ in
$\mathcal{V}|_T$ is $\ell|G:T|$. Since $|G:T|$ is odd, we see that
$\ell$ must be even. \\[2ex]
(i) By the above we only have to show that $KG^2$
contains a self-dual quasi group code. Since $|K| \equiv 1 \bmod 4$
there exists $x \in K$ such that $x^2 = -1$. Now we consider the
$KG$-module
$$ \mathcal{C} = \{a \oplus xa \mid a \in KG \} \leq KG^2.$$
Clearly $\dim \mathcal{C} = |G| = \frac{\dim \mathcal{V}}{2}$.
Furthermore, since
$$ \langle a \oplus xa, b \oplus xb \rangle = (a,b) + (xa,xb) = (a,b) + x^2(a,b)= 0,$$
the quasi group code $\mathcal{C}$ is self-dual. For $\ell\geq 2$,
it is enough to take direct sums of this code.
\\[2ex]
(ii) Suppose that there exists a self-dual quasi-$G$ code of index
$\ell$. As shown in the first paragraph of the proof
 we have $\ell=2m$.
Let $T$ be a Sylow $2$-subgroup of $G$. We consider $\mathcal{V}|_T$
which is a semi-simple $KT$-module. Note that the maximal submodule
$\mathcal{M}$ of $\mathcal{V}|_T$ on which $T$ acts trivially has
dimension $|G:T|\ell$. The Gram matrix ${\rm G}(\mathcal{M})$ of the
form restricted to $\mathcal{M}$ is a diagonal matrix of type
$(|G:T|\ell, |G:T|\ell)$ with entries $|T|$ in the diagonal. On the
other hand, $\mathcal{C}$ must intersect $\mathcal{M}$ in a totally
isotropic subspace of dimension $|G:T|m$. By (\cite{HW}, Satz
7.3.12), we get $|T|^{|G:T|\ell} (-1)^{|G:T|m}=\det {\rm
G}(\mathcal{M})(-1)^{|G:T|m} = (-1)^m \in K^{*2}$. This forces $2
\mid m$ since $|G| \equiv 3 \bmod 4$. Thus $4 \mid \ell$.

To prove the converse we only have to show that $KG^4$ contains a
quasi group code. We choose $x$ and $y$ in $K$ such that $x^2 + y^2
=-1$. Such elements exist since $\{-1 -x^2 \mid x \in K\}$ and
$\{y^2 \mid y \in K\}$ are two sets of cardinality $(|K|+1)/2$, so
that their intersection is non-empty. Next we put
$$\mathcal{C} = \{ (x a,y a,a,0), (0,-b,y b, x b) \mid a,b \in KG\} \leq KG^4.$$
Clearly, $\mathcal{C} $ is a $KG$-module of dimension $|G|^2
=\frac{\dim V}{2}$ and $\mathcal{C}  \subseteq \mathcal{C} ^\perp$.
 Thus $\mathcal{C} $ is a self-dual quasi group code. For $\ell\geq 4$,
it is enough to take direct sums of this code.\\[2ex]
(iii) Suppose that $\mathcal{C} = \mathcal{C}^\perp \in KG^\ell=:\mathcal{V}$ where $\ell$ is odd. Thus $2
\mid \dim \mathcal{V} = \ell|G|$, hence $2 \mid |G|$.

Conversely, if $\ell$ is odd and $|G|$ is even the existence of
$\mathcal{C}=\mathcal{C}^\perp \in KG^\ell$ follows immediately from
\cite{Wil} since $KG^\ell = KG \perp \cdots \perp KG$. For $\ell$
even we copy the proof of (i) with $x=1$.
\end{proof}

\begin{rmk}{\rm Let $K$ be a finite field such that $|K| \equiv 3 \bmod 4$. Already in the early paper \cite{Pl}, it is proved
that there exists a self-dual code in $K^n$ if and only if $4 \mid
n$. }
\end{rmk}

\begin{rmk} {\rm The ternary $[12,6,6]_3$ self-dual extended Golay code is not a group code. According to \cite{BLM},
it is a right ideal in a twisted group algebra $\F_3^\alpha A_4$ where $A_4$ denotes the alternating group on $4$ letters.
Actually, it is also a self-dual quasi
group code of index $\ell=4$ for a cyclic group of order $3$.
}
\end{rmk}

\bigskip

\bibliographystyle{alpha}

\end{document}